\newcommand{\eps}{\varepsilon}
\newcommand{\oeps}{1+\eps}
\newcommand{\gs}{\mathcal{G}}
\newcommand{\psbp}{$\mathsf{PSBP}$}
\newcommand{\dsbp}{$\mathsf{DSBP}$}
\newcommand{\bsz}{\mathcal{Z}} 
\newcommand{\bns}{\mathcal{B}} 
\newtheorem{lemma}{Lemma}
\newtheorem{theorem}{Theorem}
\title{EPTAS for the Dual of Splittable Bin Packing with Cardinality Constraint\thanks{Supported in part by ISF - Israeli Science Foundation grant number 308/18.}}
\author[1]{G. Jaykrishnan} 
\author[2]{Asaf Levin}
\affil[1]{Faculty of Industrial Engineering and Management, The Technion, Haifa, Israel.\ \texttt{jaykrishnang@hotmail.com}}
\affil[2]{Faculty of Industrial Engineering and Management, The Technion, Haifa, Israel.\ \texttt{levinas@ie.technion.ac.il}}
\date{}
\begin{document}
\maketitle

\begin{abstract}
The problem considered is the splittable bin packing with cardinality constraint. It is a variant of the bin packing problem where items are allowed to be split into parts but the number of parts in each bin is at most a given upper bound. Two versions of the splittable bin packing with cardinality constraint have been studied in the literature. Among these variants we consider the dual one where the objective is to minimize the maximum bin size while packing (may be fractional) the items to a given set of bins. We exhibit an EPTAS for the dual problem when the cardinality upper bound is part of the input. This result answers an open question raised by Epstein, Levin, and van Stee \cite{epstein2012}.
\end{abstract}

\section{Introduction}

The splittable bin packing with cardinality constraint problem was suggested as a compromise between memory allocation and parallelism. In that setting, the problem allows memory sharing while limiting the number of ports i.e., the processors that can access the shared memory.
In this problem we have a set of items $T$. Each item $t$ is associated with its {\em size} $S_t\geq 0$. The item sizes are not bounded (from above). 

A \emph{fractional packing} of items to bins is a packing wherein the items are allowed to be split into parts (without any restriction on the sizes or the number of parts) and the parts can be packed to any bin without restriction except for the bins size. An optimal solution to bin packing, when fractional packing is allowed, can be obtained in polynomial time as follows. Pack items sequentially (in an arbitrary order) to bins and split items when the total size of parts in the current bin is about to exceed the bin size. In this optimal solution there is at most one partially filled bin. Enforcing constraints can make the problem hard. There are many constraints considered in literature. Limiting the number of parts generated from an item, limiting the part sizes, and item conflicts are some constraints considered in the literature. We consider the \emph{cardinality constraint} which defines an upper bound $k$ on the number of parts of different items in each bin. Enforcing this constraint makes the problem NP-hard. Two versions of the splittable bin packing with cardinality constraint have been studied in the literature. 

The primal problem, denoted as \psbp{}, is defined as follows. Given a bin size $\bsz>0$, a feasible solution  is a packing of all items to bins (fractional packing is allowed) such that the total size of parts of items in each bin is at most $\bsz$ and the number of parts of different items in each bin is at most $k$. The goal function is to minimize the total number of used bins. 

The dual problem, named the {\sc Dual Splittable Bin Packing with cardinality constraint} is denoted as \dsbp{}. It is defined as follows. Given a set of bins $\bns$, a feasible solution  is a packing of all items (fractional packing is allowed) to the bins in $\bns$ such that the number of parts of different items in each bin is at most $k$. The goal function is to minimize the maximum bin size.  Observe that in this \dsbp\ the input consists of the set of items $T$, each of which associated with a size, the set of bins $\bns$, and the value of $k$.

In order to present the known results for these problems, we recall the definitions of approximation schemes both in the absolute sense and in the asymptotic sense.  Our problems are minimization problems so the objective function is referred to as cost. For an algorithm $\mathcal{A}$, we denote its cost by $\mathcal{A}$ as well. The cost of an optimal algorithm is denoted by $\mathsf{OPT}$. We define the \emph{asymptotic approximation ratio} of an algorithm $\mathcal{A}$ as the infimum $\mathcal{R} \geq 1$ such that for any input, $\mathcal{A} \leq \mathcal{R}\cdot\mathsf{OPT}+c$, where $c$ is independent of the input. If we enforce $c = 0$, $\mathcal{R}$ is called the \emph{absolute approximation ratio}. An \emph{asymptotic polynomial time approximation scheme} is a family of approximation algorithms such that for every $\eps \geq 0$, the family contains a polynomial time algorithm with an asymptotic approximation ratio of $\oeps$. We abbreviate asymptotic polynomial time approximation scheme by APTAS (also called an asymptotic PTAS). An \emph{asymptotic fully polynomial time approximation scheme} (AFPTAS) is an APTAS whose time complexity is polynomial not only in the input size but also in $1/\eps$. If the scheme satisfies the definition above with $c = 0$, stronger results are obtained, namely, polynomial time approximation schemes and fully polynomial time approximation schemes, which are abbreviated as PTASs and FPTASs, respectively. An \emph{efficient polynomial time approximation scheme} (EPTAS) is a PTAS whose time complexity is upper bounded by the form $f(1/\eps)\cdot poly(n)$ where $f$ is some computable (not necessarily polynomial) function and $poly(n)$ is a polynomial of the length of the (binary) encoding of the input. The notion of EPTAS is a modern one, motivated in the fixed parameterized tractable (FPT) community (see e.g. \cite{Cesati97}).

\paragraph{Literature review.}
\psbp{} was first considered in \cite{chung2006} with $k=2$. They show that the problem is NP-hard in the strong sense and also gave a $3/2$ approximation algorithm for this case. \cite{epstein2011improved} further generalized the primal problem for fixed values of $k\geq 3$. The authors show that the problem is NP-hard for every fixed value of $k$ and provide a $(7/5 + \eps)$-approximation algorithm for $k=2$ and for any $\eps\geq 0$. In \cite{epstein2012} special cases of the  primal problem are solved in polynomial time (polynomial in both $|T|$ and the binary encoding length of the input). For the general case when the problem becomes NP-hard, they present an EPTAS. Furthermore, the problem is strongly NP-hard so this is the best achievable scheme since an FPTAS would imply P=NP. Next, we note that the existence of an AFPTAS for the primal problem would also imply P=NP. The case when the problem becomes NP-hard is when the cost of the optimal solution is very large. In such cases the asymptotic approximation ratio and the absolute approximation ratio becomes the same up to an additive $\eps$ in the value of approximation ratio.  Therefore,  an AFPTAS would result in an FPTAS which is impossible unless P=NP. Thus, an AFPTAS is not achievable for the primal problem. 

Most relevant to our work, \cite{epstein2012} also presents a PTAS for \dsbp\ with constant values of $k$. The authors leave the existence of an EPTAS for the dual problem as an open problem. The assumption that $k$ is a constant means that in their scheme the time complexity is exponential in $k$. Our result is an EPTAS for \dsbp{} that need not assume that $k$ is a constant. Namely, we both generalize the cases for which there exist an approximation scheme for \dsbp{}, and improve the time complexity of the scheme for the cases in which there was a previous PTAS.

\paragraph{Related problems.}
Many variations of the splittable bin packing with cardinality constraint have been studied.  See \cite{ekici2021bin,castro2019study,casazza2016exactly} for some examples of exact (exponential time) algorithms and heuristics for such variants. Here, we would like to mention the bin packing with item fragmentation problem.  It is another variant of the bin packing problem  wherein the items to be packed can be split. See \cite{casazza2016exactly,bertazzi2019bin,ekici2021bin} for studies of problems with item fragmentation. The main difference between bin packing with item fragmentation and our problem is that bin packing with item fragmentation usually restricts the number of parts (of an item or total) while our problem restricts the number of parts of items that can be packed into a bin without any restriction on the number of splits of an item.

Cardinality constrained scheduling is the problem of machine scheduling on identical machines wherein there is an upper bound on the number of jobs that can be assigned to a machine and the goal is to minimize the makespan. This problem is in fact a special case of \dsbp{} as we can assume that $|T| = k \cdot |\bns|$ by adding items of size $0$, and then it is impossible to split an item. This problem admits an EPTAS \cite{chen2016}. For more on this problem please see \cite{dell2001bounds, dell2006lower, he2003kappa, chen2016, epstein2022cardinality}.

By removing the cardinality constraint in the above cardinality constrained scheduling problem, we obtain the setting of makespan minimization on identical parallel machines. We refer to \cite{Hochbaum87,hochbaum97,Alon98} for an EPTAS for this setting. \cite{JKV16} improved the time complexity of these schemes. 

Bin packing with cardinality constraint (without splitting the items) was studied in \cite{krause1975analysis,krause1977errata,kellerer1999cardinality,caprara2003approximation,babel2004algorithms,epstein2010, jansen2019approximation,balogh2020online}.
Bin packing is the problem without cardinality constraint (without splitting the items), see e.g.  \cite{fernandez1981bin,karmarkar1982efficient}. 

\paragraph{Notation.}
Given a positive integer $N$, we let $[N] = \{1,2,\ldots,N\}$ be the set of natural numbers up to $N$. We let $\mathbb{R}_{+}$ denote the set of non-negative reals and $\mathbb{R}_{++}$ the set of strictly positive reals. We use similar notation of $\mathbb{Z}_{+}$ and $\mathbb{Z}_{++}$ for integers. Let $\eps\in\mathbb{R}_{++}$ such that $1/\eps \in\mathbb{Z}:1/\eps \geq 10$. The total size of all items is denoted by $W$, that is, $W=\sum_{t\in T} S_t$.

\paragraph{Paper outline.}
For the remainder of the paper we are dealing with \dsbp{}. Recall that the number of bins determined in the input is denoted by $|\bns|$. This problem can be interpreted as fractional scheduling on parallel identical machines with the goal of minimizing the makespan but the problem is non-trivial as we have an upper bound of $k$ on the number of parts of different jobs that can be placed on a machine.

Our scheme starts by imposing a structure on the input (Section \ref{pre}). We use a guessing step, to guess approximated information on the optimal solution. For each value of the guess, we perform the following. We identify a near optimal solution with a particular structure we detail in Section \ref{structure}. This structure is exploited to generate partial information on the solution.  This partial information  is used in a formulation of a mixed integer linear program (MILP)  (see Section \ref{milp}) directing our algorithm.  The next step of the scheme is to obtain an optimal solution of the MILP. The best solution among all the optimal solutions to all possible values of the guess is converted to a feasible packing for \dsbp\ whose cost is approximately the cost of the corresponding optimal solution to the MILP (Section \ref{milp_convert}).

\section{Structuring the input}\label{pre}
Our scheme starts by guessing the optimal maximum bin size and assume that this bin size is $1$, then we perform a rounding step, and then we classify the items into small and large. All these steps can be considered as part of structuring the input and these are the steps we consider in this section. Later, we provide the remaining steps of the algorithm. 

\subsection{Guessing the optimal bin size}\label{guessing}
The optimal bin size is in the interval $\in [W/|\bns|, W]$. We guess the approximated upper bound on bin size as an integer power of $\oeps$. We guess a value $g$ that is an integer power of $\oeps$ and that the maximum bin size of an optimal solution to the problem is in the interval $(g/(1+\eps),g]$.  We conclude the following lemma.
\begin{lemma}
	The number of possible values for the guess is at most $\log_{(\oeps)} |\bns|+ 2$.
\end{lemma}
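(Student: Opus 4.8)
The plan is to count the integer exponents $j$ for which the guess $g = (\oeps)^j$ can conceivably satisfy the required property that the optimal maximum bin size lies in $(g/(1+\eps),\, g]$. Write $B^\star$ for the optimal maximum bin size. By the observation preceding the lemma, $B^\star \in [W/|\bns|,\, W]$, and this is the only fact about $B^\star$ that we need.

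First I would translate the guessing condition into two-sided bounds on $j$. If $g=(\oeps)^j$ is a legitimate guess, then the condition $B^\star \le g$ together with $B^\star \ge W/|\bns|$ gives $g \ge W/|\bns|$, hence $j \ge \log_{(\oeps)}(W/|\bns|)$; and the condition $g/(1+\eps) < B^\star$ together with $B^\star \le W$ gives $g < (\oeps)W$, hence $j < 1 + \log_{(\oeps)} W$. Therefore every exponent that could ever be used lies in the half-open real interval $\bigl[\,\log_{(\oeps)}(W/|\bns|),\ \ 1+\log_{(\oeps)} W\,\bigr)$.

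It then remains to bound the number of integers in this interval by its length plus one. Its length is
\[
\bigl(1+\log_{(\oeps)} W\bigr) - \log_{(\oeps)}(W/|\bns|) = 1 + \log_{(\oeps)} |\bns|,
\]
so the number of admissible integer exponents $j$, and hence the number of possible values of the guess $g$, is at most $\log_{(\oeps)} |\bns| + 2$, which is exactly the claimed bound.

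There is no real obstacle here; it is a counting argument, and the only points requiring a little care are that $W$ and $W/|\bns|$ need not themselves be integer powers of $\oeps$ — which is precisely why passing from the interval length to a count of integers costs the extra additive constant — and that one must keep the endpoint conventions consistent ($g\ge B^\star$ closed on one side, $g/(\oeps)<B^\star$ strict on the other) so that the set of valid exponents is the interval written above. Equivalently, one can observe that the valid guesses are the consecutive powers $(\oeps)^a,(\oeps)^{a+1},\ldots,(\oeps)^b$, where $(\oeps)^a$ is the smallest power that is at least $W/|\bns|$ and $(\oeps)^b$ is the largest power with $(\oeps)^b < (\oeps)W$, and bound $b-a+1$ directly; both phrasings give the same estimate.
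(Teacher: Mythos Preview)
Your argument is correct. The paper states this lemma without proof, treating it as immediate from the fact that the optimal bin size lies in $[W/|\bns|,W]$; your write-up supplies precisely the intended counting argument, translating the guessing condition into bounds on the exponent $j$ and bounding the number of integers in the resulting interval by its length plus one.
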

If the guess is $g$, we scale down the item sizes of all items in the original instance by dividing the sizes of all items by $g$ and initialize the approximated upper bound on the size of bins to $1$. This scaled down instance is denoted by $I$. 

\subsection{Rounding step}\label{rounding}
We next convert the scaled down instance $I$ into a new instance $I'$ for which the following two properties hold. 
\begin{outline}[enumerate]
	\1 The maximum size of an item in $I'$ is at most $1/\eps^2$. \label{one}
	\1 For every item whose size in $I'$ is at least $\eps$, the size of the item is an integer multiple of $\eps^2$.\label{two}
\end{outline}

Property \ref{one} is obtained by splitting items of size strictly greater than $1/\eps^2$. Every item $x$ of size $X$ strictly greater than $1/\eps^2$, is replaced with $\lfloor \eps^2 X\rfloor$ items of size $1/\eps^2$ and one additional item of size $X - \lfloor \eps^2 X\rfloor\cdot(1/\eps^2)$. Property \ref{two} is achieved by rounding up sizes of items at least $\eps$ to integer multiples of $\eps^2$. We denote by $I'$ the instance obtained after the preceding transformations.

\begin{lemma}
A packing of $I$ can be converted to a packing of $I'$ by increasing the size of bin by at most $3\eps$. A packing of $I'$ can be converted to a packing of $I$ in polynomial time without increasing the size of bins. 
\end{lemma}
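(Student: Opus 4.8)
The plan is to treat the two directions separately, and in each one to perform (resp.\ undo) the two modifications — the splitting of Property~\ref{one} and the rounding of Property~\ref{two} — one at a time. I write $B$ for the maximum bin size of the packing we start with; for the use in the scheme $B\le 1$, so the additive budget $3\eps$ is ample.

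\emph{From a packing of $I'$ to a packing of $I$.} This direction only undoes operations and is the easy one. To reverse the rounding, every item whose size was rounded from $S$ up to $S'=\lceil S/\eps^2\rceil\eps^2$ is shrunk back to $S$ by multiplying each of its parts by $S/S'<1$; this creates no new part and only decreases bin contents. To reverse the splitting, for every original item $x$ with $S_x>1/\eps^2$ we relabel all parts of all replacement items of $x$ (wherever they sit) as parts of $x$: the content of each bin is unchanged, and two parts of two distinct replacement items of $x$ occupying a common bin collapse into a single part of $x$, so the number of distinct items per bin never increases. Hence we obtain a feasible packing of $I$ with no increase in bin size, computable in polynomial time.

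\emph{From a packing of $I$ to a packing of $I'$.} Here we must introduce both modifications and the bins may grow. For the rounding I rescale parts proportionally: for each item $t$ with $S_t\ge\eps$, multiply each part of $t$ by $S_t'/S_t\le 1+\eps^2/\eps=1+\eps$ (parts of items of size $<\eps$ are left alone). No part is created, and since every part grows by a factor at most $1+\eps$ each bin grows by a factor at most $1+\eps$, i.e.\ additively by at most $\eps B$. For the splitting, fix $x$ with $S_x>1/\eps^2$ whose parts in the current packing have sizes $s_1,\dots,s_r$ (each at most the current bin size, hence far below $1/\eps^2$) in distinct bins; redistribute the mass of $x$ among its replacement items $x_1,\dots,x_m$ by a sweep over these bins, filling $x_1$ up to size $1/\eps^2$, then $x_2$, and so on, leaving the content of each bin unchanged. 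Since a full replacement item has size $1/\eps^2$, which exceeds any bin, each bin meets at most two consecutive replacement items of $x$; to keep every bin with a fragment of only one replacement item of $x$, I would choose the order of the sweep so as to make the breakpoints between consecutive replacement items fall between bins as far as possible, leave unused the small remainder of a bin's $x$-space at each unavoidable breakpoint, and free up the few slots needed by moving small part-fragments into the slack already created; a counting argument then shows that the total amount of mass rearranged this way is a lower-order term, absorbed within the remaining budget. Summing, the bin size increases by at most $3\eps$ and cardinality is respected.

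The step I expect to be the crux is precisely the cardinality accounting in the splitting: a naive reinterpretation of a big item's mass as its size-$1/\eps^2$ replacements forces the bins at the ``seams'' to carry fragments of two replacement items where they previously carried one, which can break the bound $k$ (most sharply when the given packing is cardinality-tight everywhere, where one must consolidate some fragments first to create room). The point is that the number of such seams is only an $\eps^2$-fraction of the bins used by each big item, so the extra fragments — and the mass displaced in making room for them — amount to a negligible perturbation that the $3\eps$ slack absorbs. Everything else — the proportional rescalings for the rounding, and the relabeling/merging for the $I'\to I$ direction — creates no new part and is routine.
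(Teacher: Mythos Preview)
Your $I'\to I$ direction and your treatment of the rounding (Property~\ref{two}) are fine and match the paper's argument in spirit; proportional rescaling by $S'_t/S_t\le 1+\eps$ is in fact a bit cleaner than the paper's ``scale by $1+\eps$ and then stuff in the leftover $<\eps^2$''.

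The gap is exactly where you flag it: the cardinality accounting for the splitting (Property~\ref{one}). Your sketch says you will ``leave unused the small remainder of a bin's $x$-space at each unavoidable breakpoint, and free up the few slots needed by moving small part-fragments into the slack already created,'' but no slack has been created at this point --- the rounding \emph{increased} bin contents, it did not open room, and ``choosing the order of the sweep'' cannot in general make breakpoints fall between bins. What you need is a mechanism that (i) produces, for each replacement item $x_i$, enough free space inside bins that already hold a fragment of $x_i$, and (ii) bounds the per-bin increase. The paper supplies precisely this: before sweeping, multiply every part of $x$ by $1+\eps$. This costs at most $\eps$ per bin, and between consecutive breakpoints the slack created is $\eps\cdot(1/\eps^2)=1/\eps$, which dominates the size (at most the current bin size $\le 1+O(\eps)$) of the seam piece. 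One then pushes the first fragment in the seam bin $b_i$ backward into bins $b_{i-1},\dots,b_i-1$, all of which already carry parts of the same replacement item $x_i$; hence no new part is created and the cardinality bound is preserved bin by bin. Your ``$\eps^2$-fraction of the bins'' intuition is correct and is what makes this work, but it needs to be cashed out via an explicit slack-creation step like the paper's; as written, the proposal does not say where the displaced seam mass goes without either exceeding $3\eps$ in some bin or adding a second fragment of $x$'s replacements to a bin.
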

\begin{proof}
Consider a packing of instance $I$. Modify each item $x$ of size $X> 1/\eps^2$ as follows. First increase the size of parts of $x$ in each bin by a multiplicative factor of $\oeps$, adding empty space as necessary. Arbitrarily order the bins in which parts of $x$ are packed and denote the ordered list of those bins by $B_x$. Let $S_b, \forall b\in B_x$ be the total size of parts of $x$ in bins up to $b$. For each $i=1,2,\ldots,\lfloor\eps^2 X\rfloor$ identify the bin $b_i\in B_x$ such that $S_{b_i-1} < i/\eps^2$ and $S_{b_i} \geq i/\eps^2$. Cut the part of $x$ in bin $b_i$ into two parts of size $i/\eps^2-S_{b_i-1}$ and $S_{b_i}-i/\eps^2$, thus increasing the number of parts in $b_i$. Notice that the empty space added in bins $b_{(i-1)}, \ldots, b_i - 1$ is $\eps\cdot 1/\eps^2 = 1/\eps$. Move the first part of $x$ in $b_i$ to the bins between $b_{(i-1)}$ and $b_{i}-1$ arbitrarily. There is enough empty space available to do this since the size of the part is at most $1<1/\eps$. The increase in the size of a bin $b$ after performing the above conversion to all items of size strictly larger than $1/\eps^2$ is at most $\eps \mathsf{S}_b\leq \eps$, where $\mathsf{S}_{b}$ is the total size of parts of items larger than $1/\eps^2$ packed into bin $b$ in the packing of $I$. Let this intermediate instance, after the above conversion, be $\underline{I}$. 

Next, perform the following conversion for each item $y$ of $\underline{I}$ of size $Y\geq\eps$. Let $B_y$ be the set of the bins in which parts of $y$ are packed. Increase the size of all parts of $y$ by a multiplicative factor of $\oeps$, adding empty space as necessary. Then, we add a part of $y$ of size $Y'$ such that $Y'=\eps^2\lceil Y/\eps^2\rceil- Y < \eps^2$. This part is added arbitrarily to the bins in $B_y$ by increasing the sizes of the parts of $y$ in the bins accordingly. The number of parts of items in each bin is not increased. This is a feasible packing since the empty space created is of size at least $\eps^2$ using $Y\geq \eps$. This increases the space of a bin $b$ by at most $\eps(\oeps) \leq 2\eps$. This gives the required packing of instance $I'$. 

For the other direction consider a packing of instance $I'$. Decrease the size of items of size strictly greater than $\eps^2$ to the original size as in $\underline{I}$ in an arbitrary way. This does not increase the maximum size of a bin or the number of parts in each bin. Now we have a packing of instance $\underline{I}$. Any packing of instance $\underline{I}$ is a packing of instance $I$ since the items of $I$ can be split arbitrarily and this does not increase the number of parts in each bin with respect to the packing of $\underline{I}$. Thus, we get a packing of $I$ without increasing the maximum bin size.
\end{proof}

After the above rounding we have that the approximated upper bound on the bin sizes is denoted as $\gs'$ and we initialize its value to be $\gs'= 1+3\eps$. Based on the guessing step, we are guaranteed there exists a feasible solution to the instance $I'$ whose maximum bin size is at most $\gs'$.

\subsection{Classification of items}
The items are classified into two categories. An item is called \emph{small} if the size of the item is strictly smaller than $\eps$, else it is \emph{large}. The size of a large item is in the interval $[\eps,1/\eps^2]$ and is an integer multiple of $\eps^2$ (as a result of the rounding step). Let $\mathsf{L}$ denote the set of distinct sizes of large items in $I'$. We have, $|\mathsf{L}| \leq 1/\eps^4-1/\eps+1 \leq 1/\eps^4$ and the upper bound is a constant when $\eps$ is fixed. Let $\mathcal{L}$ denote the set of all large items, $\mathcal{L}_\ell$ the set of large items of size $\ell\in\mathsf{L}$, and $\mathcal{S}$ the set of small items.

\section{Nice packing}\label{structure}
We next show that every feasible packing can be converted to another packing where items of size at least $\eps$ are only cut at integer multiples of $\eps^2$ when packed. To prove this we use Lemma \ref{lem3} quoted below that appears in \cite[Theorem 2.1]{lenstra1990} (see also \cite{shmoys1993}). We will use this property to assist our scheme in obtaining a near optimal solution.

In order to present the settings of \cite{lenstra1990, shmoys1993} consider the scenario where we have $n$ items and $m$ (non-identical) bins. The size of item $i$ in bin $j$ is denoted by $s_{ij}$, $G_j$ is a given upper bound on the size of bin $j$, and $g\in\mathbb{R}_+$. The meaning of the decision variables is that if item $i$ is assigned completely to bin $j$, then $x_{ij}=1$. Define the following feasibility linear program denoted as \ref{lp} which gives a fractional packing of items to bins.

\begin{align*}\tag{LP(g)}\label{lp}
	\sum_{j=1}^{m} x_{ij} &= 1,\ i = 1,2,\ldots,n,\\
	\sum_{i=1}^{n} s_{ij}x_{ij} &\leq G_j,\ j=1,2,\ldots,m\\
	x_{ij} &\geq 0,\ i=1,2,\ldots,n,\ j=1,2,\ldots,m\\
	x_{ij} &= 0 \text{ if } s_{ij}>g,\ i=1,2,\ldots,n\ ,j=1,2,\ldots,m.
\end{align*}

\begin{lemma}\label{lem3}\cite{lenstra1990, shmoys1993}
	If \ref{lp} has a feasible solution $\textbf{x}$, then there exists a feasible integer packing $\textbf{x}'$ where the size of bin $j$ is at most $G_j+g, j=1,2,\ldots,m$. Furthermore, in $\textbf{x}'$ if an item $i$ is packed in a bin $j$, then $x_{ij}>0$.\label{paper}
\end{lemma}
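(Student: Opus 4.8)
The plan is to carry out the classical rounding argument of Lenstra, Shmoys and Tardos in the notation of \ref{lp}. Since the feasible region of \ref{lp} is a nonempty bounded polytope (each $x_{ij}\in[0,1]$ by the assignment equalities), it has a vertex, and as the constraints $x_{ij}=0$ whenever $s_{ij}>g$ are part of its definition, every vertex still satisfies them; so I would replace $\mathbf{x}$ by a vertex of this polytope and assume from now on that $\mathbf{x}$ is a vertex.

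The first substantive step is a counting argument. Turning the $m$ capacity inequalities into equalities by adding slack variables, \ref{lp} becomes a linear system with $n+m$ equality constraints, so a basic (vertex) solution has at most $n+m$ nonzero coordinates altogether, and in particular $\#\{(i,j):x_{ij}>0\}\le n+m$. Every job $i$ has $\sum_{j}x_{ij}=1$ and hence at least one positive coordinate, so letting $F$ be the set of \emph{split} jobs -- those with at least two positive coordinates -- we get $n+|F|\le\#\{(i,j):x_{ij}>0\}\le n+m$, that is, $|F|\le m$. Note also that a job outside $F$ has a single positive coordinate, which is necessarily equal to $1$.

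The heart of the proof, and the step I expect to be the main obstacle, is to match every split job to a distinct machine on which it has a positive fractional assignment. Consider the support graph $B$ on the jobs and the machines, with an edge $\{i,j\}$ for each $x_{ij}>0$. A vertex solution forces $B$ to be a \emph{pseudoforest} -- each connected component contains at most one cycle -- since an over-dense component (one with more edges than vertices) would carry a nonzero perturbation of $\mathbf{x}$ supported on its edges and preserving all tight constraints, contradicting extremality. The jobs outside $F$ are exactly the degree-one job vertices of $B$; deleting them yields a pseudoforest $B'$ on $F$ together with the machines, in which every job vertex still has degree at least two, so every leaf of $B'$ is a machine. Now process $B'$ component by component: repeatedly pick a leaf machine, match it to its unique (split) job neighbour, and delete both -- this keeps all job degrees at least two and the graph a pseudoforest -- until a component has no leaf, in which case it is an even cycle and we match its jobs to its machines alternately around the cycle. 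The union of these matchings saturates $F$ and uses each machine at most once; denote by $M(i)$ the machine matched to $i\in F$.

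Finally, I would build $\mathbf{x}'$ by leaving every non-split job on the unique machine where $\mathbf{x}$ places it integrally, and routing each split job $i$ entirely to $M(i)$. If $\mathbf{x}'$ packs $i$ in $j$ then $x_{ij}>0$: for a non-split job $x_{ij}=1$, and for a split job $\{i,M(i)\}$ is an edge of $B'$; in particular $s_{i,M(i)}\le g$ by the last family of constraints of \ref{lp}. For the load of a machine $j$ under $\mathbf{x}'$: the non-split jobs placed on $j$ have total size $\sum_{i:\,x_{ij}=1}s_{ij}\le\sum_{i}s_{ij}x_{ij}\le G_j$, and at most one split job is matched to $j$, adding at most $g$; hence $j$ has load at most $G_j+g$, as required. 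Everything past the pseudoforest/matching step is bookkeeping.
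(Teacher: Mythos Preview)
Your argument is correct and is precisely the classical vertex-rounding proof of Lenstra, Shmoys and Tardos (and its refinement by Shmoys and Tardos): take a vertex of \ref{lp}, use the $n+m$ support bound to get at most $m$ split jobs, show the support graph is a pseudoforest, and extract a matching of split jobs to machines by peeling leaf machines and then handling the residual even cycles. The paper itself does not supply a proof of this lemma; it is stated with a citation to \cite{lenstra1990,shmoys1993} and used as a black box, so there is nothing to compare against beyond noting that your write-up faithfully reproduces the cited argument.
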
	
	In our scenario, the second part of the lemma will imply that the number of parts of items in each bin is not increased in $\textbf{x}'$ compared to $\textbf{x}$.

\begin{lemma}\label{nice}
Every feasible packing can be converted to another feasible packing with maximum bin size at most $1+4\eps$ where every part of every large item has a size that is an integer multiple of $\eps^2$.
\end{lemma}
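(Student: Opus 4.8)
The plan is to start from a feasible packing of $I'$ of maximum bin size at most $\gs' = 1+3\eps$ (one exists by the guessing step of Section~\ref{guessing}--\ref{rounding}), and to re-pack \emph{only} the large items so that every part they receive has size exactly an integer multiple of $\eps^2$, invoking Lemma~\ref{lem3} to control simultaneously the bin sizes and the cardinality. Since the construction will raise each bin size by at most $\eps^2\le\eps$, this yields the claimed bound $1+4\eps$.

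First I would normalize the given packing by merging, for each large item and each bin, all parts of that item that lie in that bin into a single part; this never increases the number of parts in a bin. For a large item of size $\ell = c_\ell\eps^2$ (with $c_\ell\in\mathbb{Z}_{++}$, using Property~\ref{two} and $\ell\ge\eps$) and a bin $b\in\bns$, write $a_{\ell,b}$ for the size of the (now unique) part of that item in $b$, set $A_b=\sum_\ell a_{\ell,b}$ for the total size occupied by large parts in $b$, and let $\sigma_b$ be the total size of small items in $b$, so the original size of $b$ is $\sigma_b+A_b\le\gs'$. Next I would apply Lemma~\ref{lem3} with $g:=\eps^2$, taking as the ``items'' of \ref{lp} the unit pieces obtained by splitting each large item into $c_\ell$ copies of size $\eps^2$; as ``bins'' the bins of $\bns$; $G_b:=A_b$; and, for a piece originating from a large item of size $\ell$, its size in bin $b$ equal to $\eps^2$ if $a_{\ell,b}>0$ and to some value larger than $g$ otherwise, so that the last family of constraints of \ref{lp} forbids placing a piece of that item in a bin where it had no part originally.

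A feasible fractional solution is obtained by distributing, for each large item, its $c_\ell$ pieces among the bins so that bin $b$ receives total fraction $a_{\ell,b}/\eps^2$ of them; this is a feasible transportation since $\sum_b a_{\ell,b}/\eps^2 = \ell/\eps^2 = c_\ell$, it never uses a forbidden $(i,j)$, and it loads bin $b$ with exactly $\sum_\ell a_{\ell,b}=A_b=G_b$. Lemma~\ref{lem3} then produces an integral assignment in which bin $b$ holds an integer number $n_{\ell,b}$ of pieces of each large item, with load $\eps^2\sum_\ell n_{\ell,b}$ at most $G_b+g=A_b+\eps^2$ (all placed pieces sit in permitted bins, where their size is $\eps^2$), and with $n_{\ell,b}>0$ only if that large item already had a part in $b$ — this last point being exactly the ``furthermore'' clause of Lemma~\ref{lem3}.

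Finally I would read off the packing: in each bin $b$ merge the $n_{\ell,b}$ pieces of a large item into one part of size $n_{\ell,b}\eps^2$, an integer multiple of $\eps^2$, and leave the small items untouched. Conservation of the pieces (each is fully assigned and there are exactly $c_\ell$ of them) shows each large item is packed exactly, with total size $\ell$. The number of parts in $b$ is at most the number of small items in $b$ plus the number of large items with $n_{\ell,b}>0$, and the latter set is contained in the set of large items that had a part in $b$ in the original packing; hence the cardinality constraint is preserved. The new size of $b$ is $\sigma_b+\eps^2\sum_\ell n_{\ell,b}\le\sigma_b+A_b+\eps^2\le\gs'+\eps^2=1+3\eps+\eps^2\le 1+4\eps$, using $\eps\le1$. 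The step I expect to need the most care is the cardinality bound: it hinges on the second part of Lemma~\ref{lem3} combined with the design choice to permit a piece of a large item only in bins where that item already appears (so the support of each large item never spreads), and on keeping $G_b=A_b$ — letting the single additive slack $g=\eps^2$ of Lemma~\ref{lem3} absorb all the rounding, instead of rounding each $a_{\ell,b}$ up and paying $\eps^2$ per large item in the bin, which would be disastrous since $k$ is not assumed constant.
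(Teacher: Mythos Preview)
Your proposal is correct and follows essentially the same approach as the paper's proof: both split each large item into $\eps^2$-sized auxiliary pieces, set up \ref{lp} with $g=\eps^2$ by declaring a piece's size in bin $b$ to be $\eps^2$ if its parent item already occupies $b$ and prohibitively large otherwise, invoke Lemma~\ref{lem3} to obtain an integral assignment with per-bin overflow at most $\eps^2$, and then merge the pieces of each large item in each bin back into a single part whose size is an integer multiple of $\eps^2$; the ``furthermore'' clause of Lemma~\ref{lem3} is used in both to guarantee that no large item spreads to a new bin, preserving the cardinality constraint. The only (harmless) notational issue is that you index large items by their size $\ell$, which is ambiguous when several large items share a size; the argument is unaffected once you index by the item itself.
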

\begin{proof}
Consider a feasible packing $P$ of instance $I'$ in which items are split arbitrarily. We use Lemma \ref{paper} to transform $P$ to a packing that has the required property. Let the size of bin $j$ in $P$ be $Z_j$ for all $j\in \bns$. 

We split the large items as follows. For each $\ell\in\mathsf{L}$, each large item of size $\ell$ is replaced by $\ell/\eps^2$ items of size $\eps^2$. Items thus created are called \textit{auxiliary items}. Each large item can create at most $1/\eps^4$ auxiliary items. Let this intermediate instance be $I''$ and the set of all auxiliary items in $I''$ be $\mathcal{I}$. Notice that $P$ is a packing of $I''$ by replacing the large item by the corresponding auxiliary items it generates, even though it may be infeasible for \dsbp{} for $I''$ (as the number of parts in a bin may increase). The packing of small items in each bin is left as in $P$. We modify the packing of the auxiliary items generated to get the structure we need. Thus, the effective space available for auxiliary items in a bin $j$ is now $G_j = Z_j - S_j$ where $S_j$ is the total size of parts of small items in $j$.

Next, we generate $\textbf{x}$ values for all $i\in\mathcal{I}$ and $j\in\bns$ as follows. If a portion of size $\tau \eps^2, \tau\in[0,1]$ of $i\in \mathcal{I}$ is assigned to bin $j$, then $x_{ij}=\tau$. We modify the item sizes in preparation to apply Lemma \ref{paper}. The size of part $i\in \mathcal{I}$ in bin $j$ is denoted by $s_{ij}$. If $x_{ij}>0$, then $s_{ij}=\eps^2$, and otherwise $s_{ij}=+\infty$. The $\textbf{x}$ values now satisfy \ref{lp} with $g=\eps^2$. Using Lemma \ref{paper} we convert $\textbf{x}$ to an integral packing of auxiliary items, denoted by $\textbf{x}''$. The total size of auxiliary items in bin $j$ with respect to $\textbf{x}''$ is  at most $G_j + \eps^2$. Considering the preexisting small items in bin $j$, the size of $j$ is at most $G_j + \eps^2 + S_j = Z_j + \eps^2 \leq \gs'+\eps^2 \leq (1+3\eps)+\eps = 1+4\eps$. 

Next, we modify the $\textbf{x}''$ values for auxiliary items in $I''$ to $\textbf{x}'$ for only large items of $I$. The ${x}'_{ij}$ value of a large item $i$ in bin $j$ is $\sum_{i' \in \mathcal{I}_i} x''_{i'j}$ where $\mathcal{I}_i$ is the set of auxiliary items generated from large item $i$. This is done for each large item and bin to get $\textbf{x}'$. That is, if there is a bin with multiple auxiliary items from a single large item, we consider those auxiliary items as one part of that item. The size of a part of a large item $i$ in a bin $j$ is $x'_{ij} \eps^2$. Since $\textbf{x}''$ was integer, $x'_{ij} = \sum_{i' \in \mathcal{I}_i} x''_{i'j} \in \mathbb{Z}$. Thus, the size of all parts of all large items in every bin is an integer multiple of $\eps^2$. Furthermore, in $\textbf{x}'$ the number of parts of different items in $I'$ in each bin is not more than that in $\textbf{x}$ and thus $\textbf{x}'$ is a feasible packing to \dsbp{}.
\end{proof}

A packing that satisfies Lemma \ref{nice} is called a \emph{nice packing}. Next, we redefine $\gs'=1+4\eps$ and note that it is an integer multiple of $\eps^2$. By the above steps we conclude that there exists a nice packing of $I'$ with maximum bin size at most $\gs'$, and in order to establish an EPTAS it suffices to construct a feasible solution with maximum bin size at most $(1+2\eps)\gs'$ in the required time complexity. 

\section{MILP}\label{milp}
Patterns and configurations  define partial information on the solution and are used in the MILP. The MILP  finds the patterns by which the large items are split. The MILP also finds the configurations according to which the parts generated from large items (based on the patterns) are packed in bins. In addition to these tasks, the MILP also provides a fractional packing of small items to the configurations. The optimal solution to the MILP is found in polynomial time using \cite{lenstra1983,kannan1983} since the number of integer variables is a constant. The MILP solution can be transformed to a feasible packing with a similar cost as we demonstrate in the proof of Theorem \ref{sol2pack}.

\subsection{Pattern}
A pattern is a vector of length $\gs'/\eps^2 + 1$ and each large item is assigned to a pattern which represents how the item is split. The components of a pattern $p$ are as follows.
\begin{outline}
	\1 The first component (denoted by $\alpha_p\in\mathbb{Z}_+$) stores the size of the large item as an integer multiple of $\eps^2$ to which this pattern is assigned. That is, $\alpha_p\eps^2$ is the size of the large item to which $p$ is assigned. 
	\1 The next $r\in[\gs'/\eps^2]$ components (denoted by $\beta_{pr}\in\mathbb{Z}_+$)  determine how many parts of size $r\eps^2$ (for all $r$) are cut from a large item assigned pattern $p$.
\end{outline} 

We say that a pattern $p$ is a {\em feasible pattern} if 
\begin{align}\label{feasible_pattern}
	\alpha_{p} = \sum_{r\in[\gs'/\eps^2]}\beta_{pr}r.
\end{align} 
Let $\mathcal{P}$ be the set of feasible patterns. 

\begin{lemma}
	The number of feasible patterns is at most $\left(1/\eps^4 + 1\right)^{(2/\eps^2)}$. 
\end{lemma}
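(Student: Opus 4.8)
The plan is a direct counting argument: view a feasible pattern as a vector, bound the number of admissible values in each coordinate, and multiply. Recall a pattern $p\in\mathcal{P}$ has coordinates $\alpha_p$ and $\beta_{p1},\dots,\beta_{p,\gs'/\eps^2}$, all in $\mathbb{Z}_+$, subject to the feasibility equation \eqref{feasible_pattern}, namely $\alpha_p=\sum_{r\in[\gs'/\eps^2]}\beta_{pr}r$. The first observation I would make is that $\alpha_p$ is redundant: once the $\beta_{pr}$'s are fixed, $\alpha_p$ is forced by \eqref{feasible_pattern}. Hence the number of feasible patterns is at most the number of admissible choices of the tuple $(\beta_{pr})_{r\in[\gs'/\eps^2]}$, and I never have to pay a separate factor for the $\alpha_p$ coordinate.

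Next I would bound each $\beta_{pr}$. Since pattern $p$ is assigned to a large item, whose size equals $\alpha_p\eps^2$ and lies in $[\eps,1/\eps^2]$, we get $\alpha_p\le 1/\eps^4$. Every term $\beta_{pr}r$ in the sum in \eqref{feasible_pattern} is nonnegative and $r\ge 1$, so $\beta_{pr}\le\beta_{pr}r\le\sum_{r'}\beta_{pr'}r'=\alpha_p\le 1/\eps^4$. As $\beta_{pr}$ is a nonnegative integer and $1/\eps^4$ is an integer (because $1/\eps\in\mathbb{Z}$), $\beta_{pr}$ ranges over at most $1/\eps^4+1$ values.

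Finally I would count the number of $\beta$-coordinates. There are $\gs'/\eps^2$ of them, and since $\gs'=1+4\eps$ this is $1/\eps^2+4/\eps$, which is at most $2/\eps^2$ because $4\le 1/\eps$ (using $1/\eps\ge 10$). Combining the three steps, the number of admissible $\beta$-tuples, hence the number of feasible patterns, is at most $(1/\eps^4+1)^{\gs'/\eps^2}\le(1/\eps^4+1)^{2/\eps^2}$, as required.

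I do not anticipate a genuine obstacle: the argument is elementary. The only points needing a little care are the redundancy observation about $\alpha_p$ (so as to avoid an extra multiplicative factor) and the inequality $\gs'/\eps^2\le 2/\eps^2$, which relies on the standing assumption $1/\eps\in\mathbb{Z}$, $1/\eps\ge 10$. If one prefers not to use the redundancy observation, an alternative is to bound $\alpha_p$ directly by its $\le 1/\eps^4-1/\eps+1$ possible values and absorb this factor into the slack $2/\eps^2-\gs'/\eps^2=1/\eps^2-4/\eps\ge 1$; but routing through the redundancy of $\alpha_p$ is cleaner and tighter.
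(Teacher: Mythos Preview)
Your proposal is correct and follows essentially the same approach as the paper: observe that $\alpha_p$ is determined by the $\beta_{pr}$'s via \eqref{feasible_pattern}, bound each $\beta_{pr}$ by $1/\eps^4$ using $\alpha_p\le 1/\eps^4$, and then bound the number of $\beta$-coordinates $\gs'/\eps^2$ by $2/\eps^2$ using $1/\eps\ge 10$. The only cosmetic difference is that the paper records the slightly sharper inequality $\gs'/\eps^2<2/\eps^2-1$, which is not needed for the stated bound.
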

\begin{proof}
	The $\alpha_p$ value of a feasible pattern is well-defined given the $\beta_{pr}$ for all $r$. Each of the $\beta_{pr}$ components can have at most $\left(1/\eps^4 + 1\right)$ values (by \eqref{feasible_pattern}) since $\alpha_p \leq 1/\eps^4$. The claim follow from $\gs'/\eps^2 < 2/\eps^2 - 1$ since $1/\eps\geq 10$.
\end{proof}

\subsection{Configuration}
Every bin is assigned a configuration and this configuration represents the packing of the bin. The components of a configuration $c$ are as follows.
\begin{outline}
	\1 The first component (denoted by $\gamma_c\in\mathbb{Z}_+$) represents the total size of parts of small items assigned to a bin (assigned this configuration) rounded down to an integer multiple of $\eps^2$ such that the size of the small items in such a bin is in the interval $[\gamma_c\eps^2, (\gamma_c+1)\eps^2)$.

	\1 The next $r\in[\gs'/\eps^2]$ components (denoted by $\delta_{cr}\in\mathbb{Z}_+$) store the number of different parts of a particular size (as an integer multiple of $\eps^2$) cut from large items packed to a bin assigned this configuration. For each $r\in[\gs'/\eps^2]$, $\delta_{cr}$ states the number of different parts of size $r\eps^2$ (cut from large items) packed to a bin assigned configuration $c$.
\end{outline}

\subsubsection{Creating a partial packing}
To create a {\em partial packing} of a bin from a configuration $c$ do the following. First assign a single virtual item of size $\gamma_c \eps^2$. A virtual item is not an item of instance $I'$ and it is used as a placeholder for parts of small items. Next, assign $\delta_{cr}$ parts of size $r \eps^2$ (generated from large items) in some order of $r\in[\gs'/\eps^2]$ to the bin. A configuration $c$ is a {\em feasible configuration} if the size of the bin in the partial packing created from $c$ is at most $\gs'$. Let $\mathcal{C}$ be the set of all feasible configurations.

\begin{lemma}\label{feasible_configuration}
	The number of feasible configurations is at most $(2/\eps^2)^{(2/\eps^2)}$. 
\end{lemma}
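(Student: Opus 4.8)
The plan is to bound the number of feasible configurations by a crude counting argument that treats each coordinate independently, exactly as was done for the number of feasible patterns. Recall that a configuration $c$ is the vector $(\gamma_c, \delta_{c1}, \delta_{c2}, \ldots, \delta_{c,\gs'/\eps^2})$, so it has $\gs'/\eps^2 + 1$ coordinates, and feasibility requires that the partial packing it induces has size at most $\gs'$. I would first observe that this feasibility constraint in particular forces $\gamma_c \eps^2 \le \gs'$ and, for every $r \in [\gs'/\eps^2]$, also $\delta_{cr} \cdot r \eps^2 \le \gs'$; in fact even $\gamma_c\eps^2 + \sum_r \delta_{cr} r \eps^2 \le \gs'$, but the weaker coordinatewise bound already suffices.

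Next I would turn these inequalities into a per-coordinate cardinality bound. Since $\gamma_c \in \mathbb{Z}_+$ with $\gamma_c \le \gs'/\eps^2$, there are at most $\gs'/\eps^2 + 1$ choices for $\gamma_c$. Similarly, for each $r \in [\gs'/\eps^2]$ we have $\delta_{cr} \in \mathbb{Z}_+$ with $\delta_{cr} \le \gs'/(r\eps^2) \le \gs'/\eps^2$, so there are at most $\gs'/\eps^2 + 1$ choices for each $\delta_{cr}$ as well. Hence the number of feasible configurations is at most $\bigl(\gs'/\eps^2 + 1\bigr)^{\gs'/\eps^2 + 1}$.

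Finally I would simplify this expression using the numerical facts already available in the excerpt. We redefined $\gs' = 1 + 4\eps$, and since $1/\eps \ge 10$ we get $\gs'/\eps^2 = (1+4\eps)/\eps^2 < 2/\eps^2 - 1$, so $\gs'/\eps^2 + 1 < 2/\eps^2$. Substituting this into both the base and the exponent of the bound $\bigl(\gs'/\eps^2 + 1\bigr)^{\gs'/\eps^2 + 1}$ yields the claimed bound of $(2/\eps^2)^{(2/\eps^2)}$.

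I do not anticipate a genuine obstacle here; the only point requiring a little care is making sure the inequality $\gs'/\eps^2 + 1 < 2/\eps^2$ is valid for the admissible range of $\eps$, which is where the standing assumption $1/\eps \ge 10$ (so that $4\eps + \eps^2 < 1$) is used, and confirming that the same quantity $2/\eps^2$ can be used to dominate both the base and the exponent simultaneously, which is immediate since $x \mapsto x^x$ is increasing for $x \ge 1$.
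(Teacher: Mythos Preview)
Your proposal is correct and follows essentially the same approach as the paper: bound each of the $\gs'/\eps^2 + 1$ coordinates by $\gs'/\eps^2 + 1$ possible values, obtain $(\gs'/\eps^2 + 1)^{\gs'/\eps^2 + 1}$, and then use $\gs' = 1 + 4\eps$ together with $1/\eps \ge 10$ to conclude $\gs'/\eps^2 + 1 \le 2/\eps^2$. Your write-up is in fact slightly more explicit than the paper's (you spell out the coordinatewise inequality $\delta_{cr}\cdot r\eps^2 \le \gs'$ and the numerical check $4\eps + \eps^2 < 1$), but the argument is identical.
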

\begin{proof}
	The possible values for the first component of the configuration is at most $\gs'/\eps^2 + 1$. Each of the next $\gs'/\eps^2$ components can have at most $\gs'/\eps^2+1$ values since the maximum number of parts (each of which is of size $\eps^2$) that can be assigned to a bin of size $\gs'$ is at most $\gs'/\eps^2$. The upper bound follows from $\gs'/\eps^2 + 1\leq 2/\eps^2$ since $1/\eps\geq 10$.
\end{proof}

\subsection{The families of decision variables}
	\paragraph{Assignment variables for small items to configurations.} We have an assignment variable $x_{ic}$ for small item $i$ and configuration $c$. If small item $i$ is assigned completely to bins with configuration $c$, then $x_{ic}=1$. There are at most $n\cdot |\mathcal{C}|$ such variables. These variables are allowed to be fractional.

	\paragraph{Configuration counters.}
	There is a configuration counter for every feasible configuration. For a configuration $c\in \mathcal{C}$, the variable $y_c$ denotes the number of bins to which configuration $c$ is assigned. There are at most $|\mathcal{C}|$ variables and these variables are forced to be integral.

	\paragraph{Pattern counter.}
	There is a pattern counter for every feasible pattern. For a pattern $p \in \mathcal{P}$, the variable $z_p$ denotes the number of large items to which pattern $p$ is assigned. There are $|\mathcal{P}|$ such variables and these variables are forced to be integral.

	\paragraph{Parts counter.}
	For every distinct size of parts, as an integer multiple of $\eps^2$, we have a counter. For $r\in[\gs'/\eps^2]$, $v_r$ states the number of different parts (cut from large items) of size $r\eps^2$ available for packing. These variables are allowed to be fractional.

\subsection{The constraints}

Next, we present the constraints of the MILP with their intuitive meaning.  This intuition is not used to prove our claims but it is presented to ease the reading of the constraints.

Each bin should be assigned a configuration and since one configuration can be assigned to multiple bins, we have
		\begin{align}\label{con_bin}
			\sum_{c\in\mathcal{C}}y_c = |\bns|.
		\end{align}
	
Each small item $i$ should be packed completely. Since such a small item can be split between configurations, we have
		\begin{align}\label{sma_con}
			\sum_{c\in\mathcal{C}}x_{ic} = 1,\ \forall i\in\mathcal{S} .
		\end{align}
	
All the parts that are generated from large items should be packed.
		\begin{align}
			v_r &= \sum_{p\in \mathcal{P}}\beta_{pr}z_p,\ \forall r\in[\gs'/\eps^2].\label{par_cou}\\
			v_r &= \sum_{c\in \mathcal{C}} \delta_{cr}y_c,\ \forall r\in[\gs'/\eps^2].\label{con_par}
		\end{align}
	The former constraint calculates the number of parts of each size generated from large items based on the patterns. Observe that by this constraint and the fact that $\textbf{z}$ values are forced to be integral, we conclude that in every feasible solution the $\textbf{v}$ values are also integers. The constraint \eqref{con_par} chooses the configurations such that there are enough positions to pack these parts. 

Each large item should be assigned a pattern and since some large items can be assigned to a common pattern, we have
	\begin{align}\label{pat_lar}
		\sum_{p\in \mathcal{P}:\alpha_p = \ell/\eps^2} z_p= |\mathcal{L}_\ell|,\ \forall \ell\in \mathsf{L}.
	\end{align}

	Each configuration should satisfy the cardinality constraint since each configuration represents a bin. The total number of parts of large items that can be assigned to a configuration is determined by the configuration and the number of copies of this configuration. Thus, the average number of parts of small items that can be assigned to one copy of a configuration is at most the difference between $k$ and the total number of parts of large items allowed by the configuration. We get, 
		\begin{align}\label{card}
			\sum_{i\in \mathcal{S}} x_{ic} \leq y_c \left(k - \sum_{r\in[\gs'/\eps^2]}\delta_{cr}\right),\ \forall c\in \mathcal{C}.
		\end{align}
	
The total size of parts of small items assigned to a configuration should be at most the size allowed by each configuration. Since the total size of parts of small items allowed was rounded down in each configuration, we increase the space for each copy of the configuration by $\eps^2$. That is,
		\begin{align}\label{tot_sma}
			\sum_{i\in \mathcal{S}}x_{ic}S_i \leq y_c(\gamma_c+1)\eps^2,\ \forall c \in \mathcal{C}.
		\end{align}
	Lastly, we have the non-negativity of all variables and integrality constraints of some of the variables.
		\begin{align}
			v_{r}&\in\mathbb{R}_+,\ \forall r\in [\gs'/\eps^2].\label{non_v}\\
			x_{ic}&\in\mathbb{R}_+,\ \forall c\in \mathcal{C}, \forall i\in \mathcal{S}.\label{non_x}\\ 
			y_c&\in\mathbb{Z}_+,\ \forall c\in \mathcal{C}.\label{non_y}\\ 
			z_p&\in\mathbb{Z}_+,\ \forall p\in \mathcal{P}.\label{non_z}
		\end{align}

This concludes the presentation of the MILP.  Next, we turn our attention to the analysis of this MILP.

\begin{theorem}\label{milp_analysis}
If there exists a feasible nice packing of the instance $I'$ with maximum bin size $\gs'$, then the MILP has a feasible solution.
\end{theorem}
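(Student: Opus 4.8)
The plan is to take a feasible nice packing $P$ of $I'$ whose maximum bin size is at most $\gs'$ and to read off from $P$ a concrete value for each family of MILP variables, then verify every constraint in turn. As harmless preprocessing I would first consolidate, inside every bin, all parts that originate from the same item into a single part; this never increases the number of parts in a bin nor the size of any bin, and it preserves the property that every part of a large item is an integer multiple of $\eps^2$. So from here on each item contributes at most one part to each bin. For the patterns, given a large item $a$, record for each $r\in[\gs'/\eps^2]$ the number $\beta_{pr}$ of bins of $P$ in which $a$ contributes a part of size $r\eps^2$, and set $\alpha_p=S_a/\eps^2$. Since the parts of $a$ sum to $S_a$, the identity $\alpha_p=\sum_r\beta_{pr}r$ holds, so this vector is a feasible pattern; define $z_p$ as the number of large items inducing pattern $p$. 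Every large item of size $\ell$ induces a pattern with $\alpha_p=\ell/\eps^2$, so \eqref{pat_lar} holds, and the $z_p$ are non-negative integers.

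For the configurations, given a bin $b$ of $P$, let $\delta_{cr}$ be the number of parts of size $r\eps^2$ coming from large items in $b$, and let $\gamma_c$ be the total size of small parts in $b$ rounded down to an integer multiple of $\eps^2$. The partial packing built from $c$ has size $\gamma_c\eps^2+\sum_r\delta_{cr}r\eps^2$, which is at most the true size of $b$ and hence at most $\gs'$, so $c\in\mathcal{C}$; define $y_c$ as the number of bins inducing $c$, and set $v_r=\sum_p\beta_{pr}z_p$. Then \eqref{con_bin} holds since every bin induces exactly one configuration, \eqref{par_cou} holds by the choice of $v_r$, and \eqref{con_par} holds because both $\sum_p\beta_{pr}z_p$ and $\sum_c\delta_{cr}y_c$ count the total number of size-$r\eps^2$ parts cut from large items in $P$, grouped respectively by originating item and by destination bin. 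Integrality of the $y_c$ and non-negativity of the $y_c$ and $v_r$ are immediate.

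For the small items, given $i\in\mathcal{S}$ with $S_i>0$ and $c\in\mathcal{C}$, set $x_{ic}$ to the total size of the parts of $i$ lying in bins that induce $c$, divided by $S_i$ (items of size $0$ are assigned arbitrarily); then $\sum_c x_{ic}=1$, which is \eqref{sma_con}, and the $x_{ic}$ are non-negative. Summing the total small-part size over the $y_c$ bins inducing $c$ gives $\sum_{i\in\mathcal{S}}x_{ic}S_i\le y_c(\gamma_c+1)\eps^2$, which is \eqref{tot_sma}. For \eqref{card}, in each bin inducing $c$ the number of parts of small items is at most $k-\sum_r\delta_{cr}$ (at most $k$ parts in total, of which $\sum_r\delta_{cr}$ come from large items), and after consolidation the fraction of a given small item present in that bin is at most $1$ and occupies a single part, so the sum over small items of these fractions is at most the number of small parts in the bin; summing over items and then over the $y_c$ bins inducing $c$ yields $\sum_{i\in\mathcal{S}}x_{ic}\le y_c\bigl(k-\sum_r\delta_{cr}\bigr)$.

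The step I expect to be the main obstacle is precisely this cardinality constraint \eqref{card}: one must pass from the integral per-bin statement ``at most $k$ parts, of which $\sum_r\delta_{cr}$ are large'' to the fractional, per-configuration averaged inequality, and it is exactly for this that the consolidation of equal-item parts and the multiplication of the right-hand side by $y_c$ are needed. Everything else is routine accounting, with each MILP constraint matched to one structural feature of $P$.
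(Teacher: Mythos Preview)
Your proof is correct and follows the same route as the paper: read off a pattern from each large item and a configuration from each bin, set $z_p,y_c$ as multiplicities, define $v_r$ via \eqref{par_cou}, set $x_{ic}$ to be the fraction of small item $i$ landing in bins of configuration $c$, and then verify each constraint by direct accounting; your added consolidation step is not in the paper but is harmless and slightly streamlines the check of \eqref{card}. One small slip: zero-size small items cannot be assigned \emph{arbitrarily}, since setting $x_{ic}=1$ for a configuration $c$ with $y_c=0$ would violate \eqref{card}; the fix is simply to follow $P$ (such an item still occupies a part slot in some bin $b$ of $P$, so set $x_{i,c(b)}=1$), which is exactly what the paper does by taking the $\tau$-fractions directly from the packing.
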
		
\begin{proof}
Consider a nice packing $P'$ for the instance $I'$ with maximum bin size $\gs'$ along with the set of feasible configurations and set of feasible patterns denoted by $\mathcal{C}$ and $\mathcal{P}$, respectively. We now generate a MILP solution from $P'$. 

Do the following operation for each bin $b\in \bns$ in the packing $P'$. We define a configuration $c(b)$ corresponding to $b$. Identify the total size of parts of small items assigned to $b$, round it down to the next integer multiple of $\eps^2$, and store the integer multiple as $\gamma_c$ of $c(b)$. Now the remaining parts in $b$ are from large items. For each $r\in[\gs'/\eps^2]$, identify the number of different parts (split from large items) of size $r\eps^2$ assigned to bin $b$ and store it as $\delta_{cr}$ corresponding to component $r+1$ in $c(b)$. Let $\mathcal{C}'$ be the multi-set of configurations generated from all the bins. Since the total size of small items in each bin was rounded down, the size of any configuration is not more than the total size of the items in the corresponding bin in $P'$, i.e., at most $\gs'$.

Next, we generate the patterns. For each large item $i\in \mathcal{L}$, we generate a pattern $p(i)$ corresponding to $i$. If the size of $i$ is $\alpha\eps^2$, store the value of $\alpha$ as $\alpha_p$ for the pattern $p(i)$ corresponding to $i$. For each $r\in[\gs'/\eps^2]$, identify the number of parts of size $r \eps^2$ of $i$ created when we consider the splitting of $i$ (among all bins). This value is stored as the $\beta_{pr}$ corresponding to component $r+1$ in $p(i)$. Let $\mathcal{P}'$ be the multi-set of patterns generated from all the large items.  By definition, every such pattern is a feasible pattern.

Next, we generate the MILP solution. The variable $y_c, \forall c\in \mathcal{C}$ is set to the number of copies of $c$ in $\mathcal{C}'$. $z_p, \forall p\in \mathcal{P}$ is the number of copies of $p$ in $\mathcal{P}'$. From the packing, we know which parts of small items are assigned to which bins. If a portion of size $\tau S_i, \tau\in[0,1]$ of $i\in \mathcal{S}$ is assigned to a bin, consequently a configuration $c$, then by summing the $\tau$ values over all bins with configuration $c$, we get the $x_{ic}$ value as this sum. Now, we have $\textbf{x}$. The values of $\textbf{v}$ are defined using the $\textbf{z}$ values based on constraint \eqref{par_cou} of the MILP. Thus, we have the MILP solution $(\textbf{v},\textbf{x},\textbf{y},\textbf{z})$. We still need to prove that it satisfies the constraints of the MILP. 

Lastly we check the feasibility of the generated MILP solution. From the definition of $\textbf{y}$ Constraint \eqref{con_bin} is satisfied. Constraint \eqref{sma_con} is satisfied since all the small items are packed completely in $P'$ (some are packed fractionally since items can be split) and by definition of $\textbf{x}$. Note that $v_r = \sum_{p\in \mathcal{P}}\beta_{pr}z_p$ by the assignment of values of the variables $\textbf{v}$. Thus, Constraint \eqref{par_cou} is satisfied. Furthermore, since all the large items were packed (fractionally), all the parts cut from large items were packed among all the bins.  That is, $v_r = \sum_{c\in \mathcal{C}}\delta_{cr}y_c$. Thus, Constraint \eqref{con_par} is satisfied from the definition of configurations. Since all the large items were assigned a pattern, we have that the number of patterns chosen with first coordinate equal to $\ell/\eps^2$ is equal to the number of large items of size $\ell$, and thus Constraint \eqref{pat_lar} is satisfied. Because of the fact that $P'$ was a feasible nice packing and the number of parts of different items assigned to a bin is at most $k$, we have $\sum_{i\in \mathcal{S}}x_{ic} + y_c\sum_{r\in[\gs'/\eps^2]}\delta_{cr} \leq y_ck$ for all configurations $c\in \mathcal{C}$. Thus, Constraint \eqref{card} is satisfied. The total size of parts of small item in the collection of bins $B_c$ whose corresponding configuration is $c$ is $\sum_{i\in \mathcal{S}} x_{ic}S_i$. Since each bin in $B_c$ has a total size of small items in the interval $[\gamma_c \eps^2, (\gamma_c+1) \eps^2)$, so over the $y_c$ bins in $B_c$ we have a total size of small items that is not larger than $y_c(\gamma_c+1)\eps^2$, and Constraint \eqref{tot_sma} is satisfied. Constraint \eqref{non_v}, \eqref{non_x}, \eqref{non_y}, and \eqref{non_z} are satisfied trivially based on the definition of the $\textbf{v}$, $\textbf{x}$, $\textbf{y}$, and $\textbf{z}$.  
\end{proof}

\section{Converting the MILP solution into the output of the scheme}\label{milp_convert}

In order to present our rounding method of taking a solution to the MILP and returning a feasible solution to $I'$, we need a method to handle the assignment of the small items. We will apply a method established in \cite{chen2016} originally for the cardinality constraint scheduling problem.  Specifically the analysis of a procedure referred to as Best-Fit. Next, we state the lemma analyzing the Best-Fit solution for packing the small items satisfying cardinality constraint \cite[Lemma~2]{chen2016}. We apply this lemma in the proof of the following theorem.

\begin{lemma}\label{best_fit}
If there is a feasible fractional solution to the following feasibility linear program in the variables $\textbf{x}$ 
	\begin{align*}
		\sum_{i\in \mathcal{S}} S_ix_{ib} \leq t_b&,\ b\in \bns\\
		\sum_{i\in \mathcal{S}} x_{ib} \leq c_b&,\ b\in \bns\\
		\sum_{b\in \bns} x_{ib} = 1&,\ i\in \mathcal{S}\\
		0\leq x_{ib} \leq 1&, \ i\in \mathcal{S},\ \ b\in \bns
	\end{align*}
then an integer solution satisfying:
	\begin{align*}
		\sum_{i\in \mathcal{S}} S_ix_{ib} \leq t_b + S_{\max}&,\ b\in \bns\\
		\sum_{i\in \mathcal{S}} x_{ib} \leq c_b&,\ b\in \bns\\
		\sum_{b\in \bns} x_{ib} = 1&,\ i\in \mathcal{S}\\
		0\leq x_{ib} \leq 1&, \ i\in \mathcal{S},\ \ b\in \bns
	\end{align*}
could be obtained in $O(n\log n)$ time, where $S_{\max}=\max_{i\in \mathcal{S}}\{S_i\}\leq\eps$. 
\end{lemma}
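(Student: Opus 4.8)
\emph{Proof proposal.} The feasibility LP couples the bins only through the assignment equalities $\sum_{b\in\bns}x_{ib}=1$, so I would prove the lemma by a grouping-and-rounding argument in the spirit of the rounding behind Lemma~\ref{lem3} --- this is essentially the ``Best-Fit'' analysis of \cite{chen2016}. First I would turn the problem into a bipartite matching: fix the given fractional $\textbf{x}$, put $N_b=\sum_{i\in\mathcal{S}}x_{ib}$ for each bin $b$, and note $\lceil N_b\rceil\le c_b$ because $N_b\le c_b$ and $c_b\in\mathbb{Z}_+$. Give bin $b$ a collection of $q_b:=\lceil N_b\rceil$ unit-capacity \emph{slots}. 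After sorting the small items once in non-increasing size order (the $O(n\log n)$ step), I would ``water-fill'', for each bin separately, the fractional pieces $\{x_{ib}\}_i$ into its slots in that order: fill the first slot to fractional content exactly $1$, then the second, and so on, the last slot receiving the remainder $N_b-(q_b-1)\in(0,1]$. This produces numbers $y_{i,s}\ge 0$ with $\sum_s y_{i,s}=1$ for every item $i$ and $\sum_i y_{i,s}\le 1$ for every slot $s$.

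Second, I would round $\textbf{y}$ to an integral item-to-slot assignment. Viewing $\textbf{y}$ as a point of the polytope cut out by $y\ge 0$, degree exactly $1$ at each item-node, and degree at most $1$ at each slot-node of the bipartite graph whose edges are the pairs $(i,s)$ with $y_{i,s}>0$: the constraint matrix is the node-edge incidence matrix of a bipartite graph, hence totally unimodular, and the right-hand sides are integral, so this polytope is integral and nonempty (it contains $\textbf{y}$). It therefore has an integral point $\textbf{y}'$, in which each item is matched to exactly one slot; assign item $i$ to the bin owning its slot. The cardinality bound is immediate --- bin $b$ receives at most $q_b=\lceil N_b\rceil\le c_b$ items (fewer if some of its slots remain empty after rounding).

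Third comes the size bound, which is where the sort and $S_{\max}\le\eps$ enter. For a fixed bin $b$ with slots $1,\dots,q_b$ in fill order: the item $\textbf{y}'$ places in slot $1$ has size at most $S_{\max}$; and for $j\ge 2$, since the slots were filled in non-increasing size order and slot $j-1$ was filled to content exactly $1$, every item touching slot $j$ is no larger than every item touching slot $j-1$, so the item placed in slot $j$ has size at most the $y$-weighted average $\sum_i S_i\,y_{i,j-1}$ of the sizes in slot $j-1$. Summing over $j$ telescopes to a total load at bin $b$ of at most $S_{\max}+\sum_i S_i\,x_{ib}\le t_b+S_{\max}$, which is what is claimed.

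The only delicate point --- and the place I would be most careful --- is the bookkeeping in the third step: the inequality ``one integer item in slot $j$ costs at most the weighted average of slot $j-1$'' is valid \emph{only} because the non-last slots are filled to fractional content exactly $1$ and the items are inserted in non-increasing size order, so the slot construction has to be set up with exactly that in mind. The remaining points are routine: $\lceil N_b\rceil\le c_b$ uses that $c_b$ is an integer; empty slots surviving the rounding only help both bounds; and the procedure (sort, water-fill, round the structured fractional matching) can be implemented within the stated $O(n\log n)$ time, the sort being the bottleneck.
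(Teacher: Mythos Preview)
The paper does not prove this lemma: it is quoted from \cite[Lemma~2]{chen2016} and used as a black box in Theorem~\ref{sol2pack}, so there is no in-paper argument to compare against. Your reconstruction is the Shmoys--Tardos slot-rounding for generalized assignment \cite{shmoys1993}, with the cardinality constraint carried by the slot budget $q_b=\lceil N_b\rceil\le c_b$; the index-shift size bound in your third step is exactly the standard GAP analysis, and the correctness is sound. One loose end: you invoke total unimodularity only for the \emph{existence} of an integral matching on the support of $\mathbf y$, but do not say how to \emph{find} it within $O(n\log n)$ --- the support of $\mathbf y$ can be large (an item may touch many bins in $\mathbf x$, then two slots per bin after water-filling), and generic vertex-finding or cycle-cancelling is not obviously that fast. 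The procedure in \cite{chen2016}, as the name ``Best-Fit'' suggests, produces the integral assignment by a direct greedy pass after the single sort, which makes the $O(n\log n)$ bound transparent. If you wish to keep the TU-based presentation, either argue that the per-bin water-filling yields a path structure so the total number of edges in the support of $\mathbf y$ is linear and can be rounded in linear time, or replace the rounding step with the direct greedy.
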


\begin{theorem}\label{sol2pack}
If there exists a solution to the MILP for the rounded instance $I'$ with the given guessed value $\gs'$, then there exists a feasible packing to instance $I'$ of cost at most $(1+2\eps)\gs'$ that can be found in polynomial time.
\end{theorem}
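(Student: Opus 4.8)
The plan is to read a packing of $I'$ directly off the integral part of the MILP solution and then insert the small items with one application of the Best-Fit routine of Lemma~\ref{best_fit}. Fix a feasible MILP solution $(\textbf{v},\textbf{x},\textbf{y},\textbf{z})$. First I would realize the patterns: by \eqref{pat_lar}, for every $\ell\in\mathsf{L}$ the multiset of patterns with $\alpha_p=\ell/\eps^2$ (counted with multiplicity $z_p$) has exactly $|\mathcal{L}_\ell|$ members, so I assign to each large item of size $\ell$ one such pattern $p$ and split the item into $\beta_{pr}$ parts of size $r\eps^2$ for each $r\in[\gs'/\eps^2]$; by \eqref{feasible_pattern} these parts have total size exactly $\ell$, and by \eqref{par_cou} the total number of size-$r\eps^2$ parts produced over all large items equals $v_r$ (which is an integer, as noted after the MILP). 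Next I would realize the configurations: by \eqref{con_bin} I distribute configurations among the $|\bns|$ bins so that exactly $y_c$ bins receive configuration $c$, and to each bin with configuration $c$ I reserve room for $\delta_{cr}$ parts of size $r\eps^2$ coming from large items. Since by \eqref{con_par} the total number of reserved size-$r\eps^2$ slots, $\sum_{c}\delta_{cr}y_c$, equals the number $v_r$ of size-$r\eps^2$ parts produced, and both are integers, the parts of the large items can be placed into these slots by an arbitrary perfect matching; if some bin thereby receives two parts originating from the same large item, I merge them into a single part, which leaves the bin's total size unchanged and can only decrease the number of distinct items that contribute to that bin.

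It remains to pack the small items. Disaggregate $\textbf{x}$ from configurations to individual bins by splitting uniformly: for a configuration $c$ with $y_c>0$, each of the $y_c$ bins carrying $c$ is given $x_{ib}:=x_{ic}/y_c$ for every small item $i$ (and if $y_c=0$ then \eqref{card} already forces $x_{ic}=0$). By \eqref{sma_con}, summing over all bins gives $\sum_{b\in\bns}x_{ib}=\sum_{c}x_{ic}=1$ for each $i\in\mathcal{S}$; by \eqref{card}, every bin $b$ with configuration $c$ has $\sum_{i\in\mathcal{S}}x_{ib}\le k-\sum_{r}\delta_{cr}=:c_b$; and by \eqref{tot_sma}, $\sum_{i\in\mathcal{S}}S_ix_{ib}\le(\gamma_c+1)\eps^2=:t_b$. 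Moreover $0\le x_{ib}\le1$ since $x_{ic}\le1$ and $y_c\ge1$. Thus the fractional solution $(x_{ib})$ satisfies the hypothesis of Lemma~\ref{best_fit} with these $t_b$ and $c_b$, so I apply that lemma to obtain, in $O(n\log n)$ time, an integral assignment of the small items to bins (each small item entirely to one bin) with $\sum_{i\in\mathcal{S}}x_{ib}\le c_b$ and $\sum_{i\in\mathcal{S}}S_ix_{ib}\le t_b+S_{\max}\le(\gamma_c+1)\eps^2+\eps$ for every bin $b$ with configuration $c$.

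Finally I would check that the packing produced is feasible for $I'$ and has the claimed cost. Cardinality: a bin with configuration $c$ now holds at most $\sum_{r}\delta_{cr}$ distinct large items (at most $\delta_{cr}$ parts of size $r\eps^2$, possibly fewer after merging) and at most $c_b=k-\sum_{r}\delta_{cr}$ small items, hence at most $k$ parts of different items. Size: such a bin holds large-item parts of total size $\sum_{r}\delta_{cr}r\eps^2$, and since $c$ is a feasible configuration its partial packing --- a virtual item of size $\gamma_c\eps^2$ together with exactly these parts --- has size at most $\gs'$, i.e. $\gamma_c\eps^2+\sum_{r}\delta_{cr}r\eps^2\le\gs'$; adding the small items, the bin has size at most $\sum_{r}\delta_{cr}r\eps^2+(\gamma_c+1)\eps^2+\eps\le\gs'+\eps^2+\eps$. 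Since $\gs'=1+4\eps\ge1$ and $\eps\le1/10$, we have $\eps^2+\eps\le2\eps\le2\eps\gs'$, so the maximum bin size is at most $(1+2\eps)\gs'$. Every step is performed in polynomial time. The only delicate part of the argument is this last bookkeeping: getting the small items from the aggregated configuration level down to individual bins while keeping the cardinality bound exact, and verifying that the two sources of error --- the $\eps^2$ lost by rounding $\gamma_c$ down inside each configuration and the $S_{\max}\le\eps$ lost by Best-Fit --- together fit inside the available slack $2\eps\gs'$; the realization of the patterns and configurations is a direct reading of the MILP equality constraints and presents no obstacle.
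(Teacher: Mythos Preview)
Your proposal is correct and follows essentially the same approach as the paper: realize patterns on large items via \eqref{pat_lar}, realize configurations on bins via \eqref{con_bin}, place the large-item parts into the $\delta_{cr}$ slots using \eqref{con_par}, disaggregate $x_{ic}$ uniformly to $x_{ib}=x_{ic}/y_c$, and then invoke Lemma~\ref{best_fit} to integrally place the small items, arriving at the bound $\gs'+\eps^2+\eps\le(1+2\eps)\gs'$. You are in fact slightly more careful than the paper in two places---you explicitly note the merging of parts from the same large item landing in one bin (which only helps the cardinality count) and you handle the $y_c=0$ case via \eqref{card}---but these are refinements of the same argument, not a different route.
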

\begin{proof}
Let the MILP solution be $(\textbf{v},\textbf{x},\textbf{y},\textbf{z})$, $\mathcal{C}$ the set of feasible configurations, and $\mathcal{P}$ the set of feasible patterns. We convert the MILP solution to a feasible packing.

For every $p\in \mathcal{P}$, assign $z_p$ large items, of size $\alpha_p \eps^2$, to pattern $p$. By Constraint \eqref{pat_lar}, we have enough patterns for all large items. Now we need to assign the parts given by $\textbf{v}$ and small items to bins. To do this for every $c\in \mathcal{C}$, we assign configuration $c$ to $y_c$ bins. Thus, all bins are assigned a configuration since Constraint \eqref{con_bin} is satisfied. 

Now we have information on how many parts of each size from large items we need to assign to each bin and a fractional assignment of small items to configurations. This information is used in the following way.  Assign $\delta_{cr}$ different parts of size $r\eps^2$ to each bin assigned configuration $c$ for $r\in[\gs'/\eps^2]$ in some order of $r$. All parts of size $r \eps^2, r\in[\gs'/\eps^2]$ are indeed packed in this way since Constraint \eqref{con_par} is satisfied. Next we increase the bin size by $\eps^2$ to pack the small items and this increase is needed since the total size of small items were rounded down in the configuration. From Constraint \eqref{tot_sma} this is enough space to pack all the small items fractionally as per $\textbf{x}$ as follows. For each $i\in \mathcal{S}$ and $c\in {\mathcal C}$, pack $x_{ic}/y_c$ fraction of small item $i$ to every bin $b$ assigned configuration $c$. Let $x_{ib},\forall i\in \mathcal{S}, \forall b\in \bns$ be the fractional packing of small items to bins. All the small items are packed since $\textbf{x}$ satisfies Constraint \eqref{sma_con}. This fractional packing may not be a feasible packing for \dsbp{} due to the cardinality constraint. 

Next, we modify this fractional packing of small items to an integral packing to get a feasible packing leaving the packing of the parts of large items unaltered. 
From Constraint \eqref{card}, we have that $\sum_{i\in \mathcal{S}}x_{ic}/y_c \leq k - \sum_{r\in[\gs'/\eps^2]}\delta_{cr}$. Thus, the number of small items that can be assigned integrally to a bin is at most $c_b = k -\sum_{r\in[\gs'/\eps^2]}\delta_{cr}$ where $c$ is the configuration assigned to bin $b$. From Constraint \eqref{tot_sma}, we have that the total size of small items that can be assigned to bin $b$ is at most $t_b = (\gamma_c+1)\eps^2$ where $c$ is the configuration assigned to bin $b$. $S_{\max}$ is the maximum size of a small item, i.e., at most $\eps$. We apply Lemma \ref{best_fit} to get the integral packing of small items denoted by $\textbf{x}'$ where $\textbf{x}$ is a feasible solution to the first feasibility linear program. From Lemma \ref{best_fit}, $\sum_{i\in \mathcal{S}}x'_{ib} \leq c_b = k -\sum_{r\in[\gs'/\eps^2]}\delta_{cr}$ where $c$ is the configuration assigned to bin $b$. Thus, we get $\sum_{i\in \mathcal{S}}x'_{ib} + \sum_{r\in[\gs'/\eps^2]}\delta_{cr} \leq k$  where $c$ is the configuration assigned to bin $b$. Thus, $\textbf{x}'$ along with the packing of parts of large items (as before) satisfy the cardinality constraint in each bin. The use of Lemma \ref{best_fit} increases the bin size by $S_{\max} \leq \eps$. Thus, the final bin size is at most $\gs'+\eps^2+\eps \leq(1+2\eps)\gs'$.
\end{proof}

We conclude that the following theorem is established.

\begin{theorem}\label{theorem_3}
	Problem \dsbp{} admits an EPTAS.
\end{theorem}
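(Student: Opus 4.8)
The plan is to assemble the ingredients of Sections~\ref{pre}--\ref{milp_convert} into one algorithm and then check that it meets the definition of an EPTAS. Given a target accuracy, we run the scheme below with an internal parameter $\eps$ that is a reciprocal of an integer with $1/\eps\ge 10$, chosen small enough that $(1+\eps)(1+2\eps)(1+4\eps)$ does not exceed the target; since this expression equals $1+O(\eps)$, such an $\eps$ exists and is only a constant factor smaller than the target. The algorithm enumerates every candidate value $g$ for the guessed bin size, i.e., every integer power of $(\oeps)$ in $[W/|\bns|,W]$; by the lemma of Section~\ref{guessing} there are only $O\!\big(\tfrac1\eps\log|\bns|\big)$ of them. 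For each $g$ it forms the scaled instance $I$, the rounded instance $I'$, the size set $\mathsf L$, the set $\mathcal P$ of feasible patterns, the set $\mathcal C$ of feasible configurations, and the associated MILP; if the MILP is feasible it is solved to optimality and Theorem~\ref{sol2pack} turns the solution into a feasible packing of $I'$, which the second part of the conversion lemma of Section~\ref{rounding} turns into a feasible packing of $I$ without increasing bin sizes, after which all sizes are scaled back up by the factor $g$. Over all guesses producing a packing, the algorithm outputs one with the smallest maximum bin size; note every such output is a genuine feasible packing of the original instance, so its cost is at least $\mathsf{OPT}$.

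For correctness it suffices to follow the iteration for the ``correct'' guess, the power $g$ of $(\oeps)$ with $\mathsf{OPT}\in(g/(\oeps),g]$, which exists since $\mathsf{OPT}\in[W/|\bns|,W]$. After dividing all sizes by $g$, the instance $I$ admits a packing of maximum bin size $\mathsf{OPT}/g\le 1$. By the conversion lemma of Section~\ref{rounding} this yields a packing of $I'$ of maximum bin size at most $1+3\eps$, and by Lemma~\ref{nice} a \emph{nice} packing of $I'$ of maximum bin size at most $\gs'=1+4\eps$. Theorem~\ref{milp_analysis} then guarantees a feasible solution to the MILP for this guess, and Theorem~\ref{sol2pack} converts it into a feasible packing of $I'$ of cost at most $(1+2\eps)\gs'=(1+2\eps)(1+4\eps)$. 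Undoing the rounding (no increase in bin size) and multiplying sizes back by $g\le(1+\eps)\mathsf{OPT}$ gives a feasible packing of the original instance of cost at most $(1+\eps)(1+2\eps)(1+4\eps)\,\mathsf{OPT}$. Hence the reported solution has cost at most $(1+\eps)(1+2\eps)(1+4\eps)\,\mathsf{OPT}$, which by the choice of $\eps$ is within the target factor of $\mathsf{OPT}$.

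For the running time, the outer loop adds only the polynomial factor $O\!\big(\tfrac1\eps\log|\bns|\big)$. Inside one iteration, building $I'$, $\mathsf L$, $\mathcal P$, $\mathcal C$ and writing down the MILP takes a function of $1/\eps$ times a polynomial in the input length, since $|\mathsf L|\le 1/\eps^4$, $|\mathcal P|\le(1/\eps^4+1)^{2/\eps^2}$ and $|\mathcal C|\le(2/\eps^2)^{2/\eps^2}$ depend on $1/\eps$ only. The only integral variables of the MILP are the $|\mathcal C|$ configuration counters and the $|\mathcal P|$ pattern counters, so their number depends only on $1/\eps$ while the remaining variables and all constraints are polynomial in $n$; therefore the MILP is solved in polynomial time by \cite{lenstra1983} (or \cite{kannan1983}). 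Finally, the conversion of Theorem~\ref{sol2pack}, including the Best-Fit step of Lemma~\ref{best_fit}, runs in time polynomial in $n$. Altogether the scheme runs in time $f(1/\eps)\cdot\mathrm{poly}(n,\log|\bns|)$, which is exactly the form demanded of an EPTAS.

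No deep new idea is needed for this final step: the structural work (Lemma~\ref{nice} and the MILP analysis and rounding, Theorems~\ref{milp_analysis} and \ref{sol2pack}) is already done. The points requiring care are bookkeeping: verifying that the \emph{number of integral} variables --- not merely the total number of variables --- is bounded by a function of $1/\eps$ alone, so that \cite{lenstra1983,kannan1983} solves the MILP in polynomial time; checking that the guessing loop is a polynomial, not exponential, overhead; and correctly composing the several multiplicative $(1+c\eps)$ errors and rescaling the internal parameter at the end so that the overall guarantee meets the requested accuracy.
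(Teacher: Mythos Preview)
Your proposal is correct and follows exactly the approach the paper takes: the theorem is stated immediately after Theorem~\ref{sol2pack} with the words ``We conclude that the following theorem is established,'' and your write-up simply spells out that conclusion by chaining the guessing lemma, the rounding lemma, Lemma~\ref{nice}, Theorem~\ref{milp_analysis}, Theorem~\ref{sol2pack}, and the Lenstra/Kannan bound on the MILP, together with the observation that the number of integral variables depends only on $1/\eps$. There is no substantive difference between your argument and the paper's.
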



\end{document}